\newtheorem{theorem}{Theorem}
\newtheorem{proposition}[theorem]{Proposition}
\newtheorem{lemma}[theorem]{Lemma}
\newcommand{\be}{\begin{equation}}
\newcommand{\ee}{\end{equation}}
\newcommand{\bea}{\begin{eqnarray}}
\newcommand{\eea}{\end{eqnarray}}
\newcommand{\ba}{\begin{array}}
\newcommand{\ea}{\end{array}}
\newcommand{\bean}{\begin{eqnarray*}}
\newcommand{\eean}{\end{eqnarray*}}
\newcommand{\pa}{\partial}
\begin{document}

\title{The integral type gauge transformation and the additional
 symmetry for the constrained KP hierarchy}
\author{Jipeng Cheng\dag, Jingsong He$^*$ \ddag }
\dedicatory {  \dag \ Department of Mathematics, China University of
Mining and Technology, Xuzhou, Jiangsu 221116,
P.\ R.\ China\\
\ddag \ Department of Mathematics, Ningbo University, Ningbo ,
Zhejiang 315211, P.\ R.\ China }

\thanks{$^*$Corresponding author. Email: hejingsong@nbu.edu.cn. Tel: 86-574-87600739}
\begin{abstract}
In this paper, the compatibility between the integral type gauge
transformation and the additional symmetry of the constrained KP
hierarchy is given. And the string-equation constraint in matrix
models is also derived. \\
\textbf{Keywords}:  constrained KP
hierarchy, integral type gauge
transformation, additional symmetry. \\
\textbf{PACS}: 02.30.Ik\\
\textbf{2010 MSC}: 35Q53, 37K10, 37K40
\end{abstract}
\maketitle
\section{Introduction}
The Kadomtsev-Petviashvili (KP) hierarchy (see
\cite{dickey2003,djkm1983} references therein) is one of the most
famous integrable systems, which has many important applications in
theoretical physics and mathematics, such as 2d quantum
gravity\cite{witten1991,kontsevich1992,adler1992} and infinite -
dimensional Lie algebras \cite{kac,jm}. The constrained KP (cKP)
hierarchy\cite{chengli,kss1991,Cheng92}, developed from the point of
view of symmetry constraint, can be viewed as an important reduction
of the KP hierarchy, which includes many well-known integrable
systems, such as the AKNS hierarchy, the Yajima-Oikawa hierarchy and
many others \cite{chengli,kss1991,Cheng92}.

The gauge transformation \cite{chau1992,oevel1993} is an efficient
method to solve the KP hierarchy, which in fact reflects the
intrinsic integrability of the KP hierarchy. Chau {\it et al}
\cite{chau1992} introduce two types of elementary gauge
transformation operators: the differential type $T_D$ and the
integral type $T_I$. By now, the gauge transformations of many
integrable hierarchies related to KP hierarchy have been derived,
for example, the cKP
hierarchy\cite{oevel1993,aratyn1995,chau1997,willox,he2003}, the
constrained BKP and CKP hierarchy\cite{he2007} (cBKP and cCKP), the
discrete KP hierarchy\cite{oevel1996,liu2010}, the $q$-KP
hierarchy\cite{tu1998,he2006} and so on. The additional symmetry
\cite{Fokas1981,Chen1983,OS86,ASM95,D95,takasaki1996,Tu07,he2007b,cheng2011,1li2012,2li2012}
is a kind of symmetry depending explicitly on the space and time
variables, involved in so-called string equation and the generalized
Virasoro constraints in matrix models of the 2d quantum gravity (see
\cite{dickey2003,vM94} and references therein). The corresponding
additional symmetry for the cKP hierarchy is constructed in
\cite{aratyn97,tian2011,tu2011} by an appropriate modification of
the standard additional symmetry flows by adding a set of ``ghost"
symmetry flows. So it is an interesting problem to show the
compatibility between the gauge transformation and the additional
symmetry of the cKP hierarchy, in order to show a new inner
consistency of the integrable hierarchy.

In this paper, we mainly restrict to the integral type gauge
transformation $T_I$, and show that the additional symmetry flows
for the cKP hierarchy commute with the integral type gauge
transformations preserving the form of cKP , up to shifting of the
corresponding additional flows by ordinary time flows, which
reflects one of the intrinsic consistences for the cKP hierarchy:
the compatibility between the integral type gauge transformations
and the additional symmetries. Further, upon the basis of this
result, the string-equation constraint in matrix models is also
derived.

This paper is organized in the following way. In Section 2, we
recall some background on the KP hierarchy. Then the integral type
gauge transformation and the additional symmetry for the cKP
hierarchy are reviewed in Section 3 and Section 4 respectively. In
Section 5, we study the compatibility of the integral type gauge
transformation and the additional symmetry for the cKP hierarchy. At
last, some conclusions and discussions are given in section 6.

\section{Background on the KP Hierarchy}
The KP hierarchy (see \cite{dickey2003,djkm1983} references therein)
is defined as the following Lax equation:
\begin{equation}\label{laxeq}
    \frac{\pa L}{\pa t_n}=[B_n,L],\quad B_n=(L^n)_+,\ n=1,2,3,\cdots
\end{equation}
with the Lax operator given by
\begin{equation}\label{laxop}
    L=\pa+u_2\pa^{-1}+u_3\pa^{-2}+\cdots,
\end{equation}
where $\pa=\frac{\pa}{\pa x}$, coefficient functions $u_i$ are all
the functions of the time variables $t=(t_1=x,t_2,t_3,\cdots)$, and
$(A)_{\pm}$ denote the differential part and the integral part of
the pseudo-differential operator $A$, respectively.

One can also represent the Lax operator in terms of a dressing
operator $W$:
\begin{equation}\label{dressing}
 L=W\pa W^{-1},\qquad W=1+\sum_{j=1}^\infty w_j\pa^{-j},
\end{equation}
and then the Lax equation (\ref{laxeq}) is equivalent to the Sato's
equation:
\begin{equation}\label{satoeq}
\frac{\pa W}{\pa t_n}=-(L^{n})_-W.
\end{equation}

Define next the Baker-Akhiezer (BA) function by:
\begin{equation}\label{wavefunction}
    \psi(t,\lambda)=W(e^{\xi(t,\lambda)})=w(t,\lambda)e^{\xi(t,\lambda)};\
    \ w(t,\lambda)=1+\sum_{i=1}^\infty w_i(t)\lambda^{-i},
\end{equation}
where
\begin{equation}\label{xi}
\xi(t,\lambda)=\sum_{i=1}^\infty t_n\lambda^n.
\end{equation}
Accordingly, there is also an adjoint BA function:
\begin{equation}\label{adjointwavefunction}
    \psi^*(t,\lambda)=W^{*-1}(e^{-\xi(t,\lambda)})=w^*(t,\lambda)e^{-\xi(t,\lambda)};\
    \ w^*(t,\lambda)=1+\sum_{i=1}^\infty w^*_i(t)\lambda^{-i}.
\end{equation}
Then the Lax equation (\ref{laxeq}) can be regarded as the
compatibility conditions for the following system:
\begin{eqnarray}
L(\psi(t,\lambda))=\lambda \psi(t,\lambda),&& \frac{\pa
\psi(t,\lambda)}{\pa t_n}=B_n(\psi(t,\lambda)),\nonumber\\
L^*(\psi^*(t,\lambda))=\lambda \psi^*(t,\lambda),&& \frac{\pa
\psi^*(t,\lambda)}{\pa t_n}=-B_n^*(\psi^*(t,\lambda)).\label{linsys}
\end{eqnarray}
Here for any (pseudo-) differential operator $A$ and a function $f$,
the symbol $A(f)$ will indicate the action of $A$ on $f$, whereas
the symbol $Af$ will denote just operator product of $A$ and $f$,
and $*$ stands for the conjugate operation: $(AB)^*=B^*A^*$,
$\pa^*=-\pa$, $f^*=f$.

The whole KP hierarchy can be characterized by a single function
$\tau(t)$ called $\tau$-function such that
\begin{eqnarray}
w(t,\lambda)&=& \frac{\tau
(t-[\lambda^{-1}])}{\tau(t)},\label{tauepression1}\\
w^*(t,\lambda)&=& \frac{\tau
(t+[\lambda^{-1}])}{\tau(t)},\label{tauexpression2}
\end{eqnarray}
where
$[\lambda^{-1}]=(\lambda^{-1},\frac{1}{2}\lambda^{-2},\cdots)$. This
implies that all dynamical variables $u_i$ in the Lax  operator $L$
can be expressed by $\tau$-function, which is an essential character
of the KP hierarchy..

If the functions $q(t)$ and $r(t)$ satisfy
\begin{equation}\label{eigenfunction}
\frac{\pa q}{\pa t_n}=B_n(q),\quad\quad \frac{\pa r}{\pa
t_n}=-B_n^*(r),
\end{equation}
then we call them the eigenfunction and the adjoint eigenfunction
respectively.

The cKP hierarchy\cite{Cheng92} is defined by restricting the Lax
operator of the ordinary KP hierarchy (\ref{laxeq}) in the following
form:
\begin{equation}\label{ckplax}
    L=\pa+\sum_{i=1}^mq_i\pa^{-1}r_i,
\end{equation}
where $q_i$ and $r_i$ are the eigenfunction and adjoint
eigenfunction respectively.

\section{The integral type gauge transformation of the cKP hierarchy}
Let $L^{(0)}$ be the Lax operator of the KP hierarchy (\ref{laxeq}),
and $T$ be a pseduo-differential operator. If the transformation
\begin{equation}\label{gaugegeneral}
    L^{(1)}=TL^{(0)}T^{-1}
\end{equation}
such that
\begin{equation}\label{tranlaxeq}
 \frac{\pa L^{(1)}}{\pa t_n}=[B_n^{(1)},L^{(1)}],\quad B_n^{(1)}=(L^{(1)})^n_+,\ n=1,2,3,\cdots
\end{equation}
still holds for transformed Lax operator $L^{(1)}$, then $T$ is
called the gauge transformation operator of the KP hierarchy.

Chau {\it et al} \cite{chau1992} proved there are the following two
kinds of the gauge transformation operators:
\begin{eqnarray}
{\rm Type\ I:}&& T_D(\chi)=\chi\pa \chi^{-1},\\
{\rm Type\ II:}&& T_I(\mu)=\mu^{-1}\pa^{-1} \mu,
\end{eqnarray}
where $\chi$ and $\mu$ are the eigenfunction and adjoint
eigenfunction respectively. The type I transformation is called the
differential type, while the type II is called the integral type. In
this paper, we mainly study the second one, that is, the integral
type.

The gauge transformation of the cKP hierarchy is investigated in
\cite{oevel1993,aratyn1995,chau1997,he2003}. Here we only review
some results about the integral type. Assume
\begin{equation}\label{ckplax0}
    L^{(0)}=\pa+\sum_{i=1}^mq_i^{(0)}\pa^{-1}r_i^{(0)}
\end{equation}
be the Lax operator of the cKP hierarchy. Under the integral type
gauge transformation $T_I(\mu)$, the transformed Lax operator will
be:
\begin{eqnarray}
  L^{(1)}&=&T_I(\mu)L^{(0)}T_I(\mu)^{-1}=\pa+L^{(1)}_-,\label{ckplax1}\\
  L^{(1)}_-&=&q_0^{(1)}\pa^{-1}r_0^{(1)}+\sum_{i=1}^mq_i^{(1)}\pa^{-1}r_i^{(1)},\label{ckplax1minus}\\
  q_0^{(1)}&=&\mu^{-1},\quad r_0^{(1)}=-(T_D(\mu)L^{(0)*})(\mu),\label{qr01}\\
  q_i^{(1)}&=&T_I(\mu)(q_i^{(0)}),\quad
  r_i^{(1)}=T_I(\mu)^{*-1}(r_i^{(0)})=-T_D(\mu)(r_i^{(0)}),\label{qri1}\\
  \tau^{(1)}&=&\mu \tau^{(0)}.\label{tau1}
\end{eqnarray}
In order to preserve the form (\ref{ckplax0}) of the Lax operator
$L^{(0)}$, $\mu$ is required to coincide with one of the original
adjoint eigenfunctions of $L^{(0)}$, e.g. $\mu=r_1^{(0)}$, since
$r_1^{(1)}=0$ in this case. Applying successive the integral type
gauge transformations
\begin{equation}\label{ckplaxk}
    L^{(k+1)}=T_I^{(k)}L^{(k)}\left(T_I^{(k)}\right)^{-1},\ \
    T_I^{(k)}\equiv\left(r_1^{(k)}\right)^{-1}\pa^{-1}r_1^{(k)}
\end{equation}
yields:
\begin{eqnarray}
  L^{(k+1)}&=&\pa+\sum_{i=1}^mq_i^{(k)}\pa^{-1}r_i^{(k)},\label{ckplaxkminus}\\
  q_1^{(k+1)}&=&\left(r_1^{(k)}\right)^{-1},\quad r_1^{(k+1)}=-(T_D(r_1^{(k)})L^{(k)*})(r_1^{(k)}),\label{qr1k}\\
  q_i^{(k+1)}&=&T_I^{(k)}(q_i^{(k)}),\quad
  r_i^{(k+1)}=\left(T_I^{(k)}\right)^{*-1}(r_i^{(k)}),\ i=2,3,...,m,\label{qrik}\\
  \tau^{(k+1)}&=&r_1^{(k)}\tau^{(k)}.\label{tauk}
\end{eqnarray}
Successive applications of the integral type gauge transformation
will lead to,
\begin{eqnarray}
  r_1^{(n)}&=&(-1)^n\frac{W_{n+1}[r_1^{(0)},\eta_1^{(1)},...,\eta_1^{(n)}]}{W_{n}[r_1^{(0)},\eta_1^{(1)},...,\eta_1^{(n-1)}]},\label{gauger1n}\\
  r_i^{(n)}&=&(-1)^n\frac{W_{n+1}[r_1^{(0)},\eta_1^{(1)},...,\eta_1^{(n-1)},r_i^{(0)}]}{W_{n}[r_1^{(0)},\eta_1^{(1)},...,\eta_1^{(n-1)}]},\label{gaugerin}\\
  \tau^{(n)}&=&r_1^{(n-1)}r_1^{(n-2)}...r_1^{(0)}\tau^{(0)}=W_{n}[\eta_1^{(n-1)},...,\eta_1^{(1)},r_1^{(0)}]\tau^{(0)},\label{gaugetau}
\end{eqnarray}
where $\eta_1^{(k)}=(L^{(0)*})^k(r_1^{(0)})$.

Some useful identities involving the integral type gauge
transformation of the $f\pa^{-1}g$-form are listed in the lemma
below.
\begin{lemma}\label{gaugelemma}
\begin{eqnarray}
T_I(r_a)(M\pa^{-1}r_a)T_I(r_a)^{-1}&=&-r_a^{-1}\pa^{-1}\left\{T_D(r_a)(M\pa^{-1}r_a)^*(r_a)\right\},\label{tiMr}\\
T_I(r_a)(q_a\pa^{-1}N)T_I(r_a)^{-1}&=&-r_a^{-1}\pa^{-1}\left\{T_D(r_a)(q_a\pa^{-1}N)^*(r_a)\right\}+\widetilde{L}(\widetilde{q}_a)\pa^{-1}\widetilde{N},\label{tiqN}\\
T_I(r_a)(M\pa^{-1}N)T_I(r_a)^{-1}&=&-r_a^{-1}\pa^{-1}\left\{T_D(r_a)(M\pa^{-1}N)^*(r_a)\right\}+\widetilde{M}\pa^{-1}\widetilde{N},\label{tiMN}\\
\widetilde{L}^{k+1}(\widetilde{q}_a)&=&T_I(r_a)L^{k}(q_a),\ \ k=0,1,2,....,\label{lqa}\\
(\widetilde{L}^*)^{k-1}(\widetilde{r}_a)&=&T_I(r_a)^{*-1}(L^*)^k(r_a),\
\ k=1,2,3....\label{lra}
\end{eqnarray}
where $r_a$ is one of the adjoint eigenfunctions of the cKP
hierarchy (\ref{ckplax}), $M$ and $N$ are two functions of $t$, and
\begin{eqnarray}
\widetilde{L}=T_I(r_a)LT_I(r_a)^{-1},\ \ \widetilde{q}_a=1/r_a,\ \
\widetilde{M}=T_I(r_a)(M),\ \
\widetilde{N}=T_I(r_a)^{*-1}(N).\label{symbols}
\end{eqnarray}

\end{lemma}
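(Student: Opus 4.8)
The plan is to treat (\ref{tiMN}) as the master identity, to deduce (\ref{tiMr}) and (\ref{tiqN}) from it as specializations, and to establish the two ``function'' identities (\ref{lqa}) and (\ref{lra}) separately by induction on $k$; the logical order is (\ref{tiMN}) first, then (\ref{lqa})--(\ref{lra}), and finally (\ref{tiqN}).

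First I would prove (\ref{tiMN}) by a direct computation in the algebra of pseudo-differential operators. Substituting $T_I(r_a)=r_a^{-1}\pa^{-1}r_a$ and $T_I(r_a)^{-1}=r_a^{-1}\pa r_a$ into the left-hand side and pushing the outer factors inward, one needs only the two elementary rules $\pa^{-1}h\pa=h-\pa^{-1}h'$ and $\pa^{-1}u\pa^{-1}=(\pa^{-1}u)\pa^{-1}-\pa^{-1}(\pa^{-1}u)$ (integration by parts), valid for arbitrary functions $h,u$. Processing the right tail gives $\pa^{-1}Nr_a^{-1}\pa r_a=N-\pa^{-1}(N'-Nr_a^{-1}r_a')$, and applying the remaining $r_a^{-1}\pa^{-1}r_a$ splits the result into a rank-one part $r_a^{-1}\pa^{-1}(\,\cdot\,)$ and a doubly-integral part $r_a^{-1}\pa^{-1}(r_aM)\pa^{-1}(\,\cdot\,)$. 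The second rule collapses the latter into $\widetilde M\pa^{-1}\widetilde N$ plus a further $r_a^{-1}\pa^{-1}(\,\cdot\,)$ correction, once one recognizes $\widetilde M=r_a^{-1}\pa^{-1}(r_aM)$ and $\widetilde N=-T_D(r_a)(N)=-(N'-Nr_a^{-1}r_a')$. Collecting the two rank-one pieces and checking that their combined coefficient equals $-T_D(r_a)(M\pa^{-1}N)^{*}(r_a)$ — a short computation using $(M\pa^{-1}N)^{*}=-N\pa^{-1}M$ together with $r_a(r_a^{-1}N)'=-\widetilde N$ — yields exactly (\ref{tiMN}).

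The remaining operator identities then follow as specializations. Setting $N=r_a$ gives $\widetilde N=-T_D(r_a)(r_a)=0$, which kills the last term of (\ref{tiMN}) and produces (\ref{tiMr}); setting $M=q_a$ gives $\widetilde M=T_I(r_a)(q_a)$, and rewriting this as $\widetilde L(\widetilde q_a)$ by means of (\ref{lqa}) with $k=0$ produces (\ref{tiqN}). For (\ref{lqa}) and (\ref{lra}) I would argue by induction on $k$, the driving mechanism being the intertwining relations $\widetilde L\,T_I(r_a)=T_I(r_a)\,L$ and $\widetilde L^{*}\,T_I(r_a)^{*-1}=T_I(r_a)^{*-1}\,L^{*}$, both immediate from $\widetilde L=T_I(r_a)LT_I(r_a)^{-1}$ and its adjoint. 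Applied to a function these give $\widetilde L(T_I(r_a)f)=T_I(r_a)(Lf)$ and $\widetilde L^{*}(T_I(r_a)^{*-1}f)=T_I(r_a)^{*-1}(L^{*}f)$, so each extra power of $\widetilde L$ (resp. $\widetilde L^{*}$) is traded for one power of $L$ (resp. $L^{*}$) and the induction closes once a base case is in hand. For (\ref{lra}) the base case $k=1$ is essentially the definition of the transformed adjoint eigenfunction $\widetilde r_a=r_0^{(1)}=T_I(r_a)^{*-1}(L^{*}(r_a))$.

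The step I expect to be the genuine obstacle is the base case $k=0$ of (\ref{lqa}), namely $\widetilde L(\widetilde q_a)=T_I(r_a)(q_a)$ with $\widetilde q_a=1/r_a$. The delicacy is that $1/r_a$ lies in the kernel of $T_I(r_a)^{-1}=r_a^{-1}\pa r_a$, so the naive right-to-left evaluation of $T_I(r_a)LT_I(r_a)^{-1}(1/r_a)$ spuriously returns zero; equivalently $1/r_a$ is not in the image of $T_I(r_a)$, which is exactly why this case escapes the intertwining argument and forces the shift by one in the exponent of (\ref{lqa}). To handle it correctly I would first reduce $\widetilde L$ to its pseudo-differential normal form $\widetilde L=\pa+q_0^{(1)}\pa^{-1}r_0^{(1)}+\sum_{i\neq a}q_i^{(1)}\pa^{-1}r_i^{(1)}$, with $q_0^{(1)}=1/r_a$ and $r_a^{(1)}=0$, using (\ref{tiMr})--(\ref{tiMN}) so that $\pa^{-1}$ is treated as a genuine antiderivative, and then compute $\widetilde L(1/r_a)$ termwise. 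An integration by parts shows that $\pa(1/r_a)=-r_a'/r_a^{2}$ cancels part of $r_a^{-1}\pa^{-1}(r_0^{(1)}/r_a)$, while the contributions of the terms with $i\neq a$ cancel exactly against the remaining part of $r_0^{(1)}$; what survives is only the $i=a$ term, namely $r_a^{-1}\pa^{-1}(r_aq_a)=T_I(r_a)(q_a)$. With this base case secured, the induction yields (\ref{lqa}) for all $k$, and the proof is complete.
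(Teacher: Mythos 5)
Your proposal is correct, and for (\ref{tiMN}), (\ref{tiMr}), (\ref{tiqN}) and (\ref{lra}) it coincides with the paper's own route: the paper also proves (\ref{tiMN}) by pushing $T_I(r_a)=r_a^{-1}\pa^{-1}r_a$ through $M\pa^{-1}N$ with the same two commutation rules (its $f\pa^{-1}-\pa^{-1}f=\pa^{-1}f_x\pa^{-1}$ is your integration-by-parts rule), obtains (\ref{tiMr}) by setting $N=r_a$ and using $T_I(r_a)^{*-1}(r_a)=0$, gets (\ref{tiqN}) from (\ref{tiMN}) plus (\ref{lqa}), and proves (\ref{lra}) by inserting $T_I(r_a)^{*}T_I(r_a)^{*-1}$, which is exactly your intertwining-plus-base-case $\widetilde r_a=T_I(r_a)^{*-1}L^*(r_a)$. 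The one place where you genuinely diverge is (\ref{lqa}). The paper writes $\widetilde{L}^{k+1}(\widetilde q_a)=T_I(r_a)L^{k+1}T_I(r_a)^{-1}(r_a^{-1})$, faces exactly the kernel problem you identify ($T_I(r_a)^{-1}(r_a^{-1})=r_a^{-1}\pa(1)=0$), and resolves it by decree: it fixes the integration constants via the convention $q_i\pa_x^{-1}(0)=0$ for $i\neq a$ and $q_a\pa_x^{-1}(0)=q_a$, which makes $LT_I(r_a)^{-1}(r_a^{-1})=q_a$ and gives the result in one line. You instead compute the base case $\widetilde L(\widetilde q_a)=T_I(r_a)(q_a)$ from the explicit normal form of $\widetilde L$ (where every $\pa^{-1}$ acts on an honest nonzero function, the $\pa(r_a^{-1})$ term cancels against part of $r_a^{-1}\pa^{-1}(r_0^{(1)}r_a^{-1})$, and the $i\neq a$ contributions cancel, leaving the $i=a$ term) and then propagate by the intertwining relation. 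Both arguments land in the same place; the paper's is shorter but hides the ambiguity inside an ad hoc convention on $\pa_x^{-1}(0)$, whereas yours makes explicit why the exponent shifts by one and why the answer is forced, at the cost of a longer termwise computation. Your diagnosis that this is the genuinely delicate step in the lemma is accurate.
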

\begin{proof}
Firstly, according to $f\pa^{-1}-\pa^{-1}f=\pa^{-1}f_x\pa^{-1}$ and
$\pa f-f\pa=f_x$,
\begin{eqnarray*}
&&T_I(r_a)(M\pa^{-1}N)T_I(r_a)^{-1}=r_a^{-1}\pa^{-1}r_a(M\pa^{-1}N)r_a^{-1}\pa r_a\\
&=&r_a^{-1}\left(\pa_x^{-1}(r_aM)\pa^{-1}-\pa^{-1}\pa_x^{-1}(r_aM)\right)Nr_a^{-1}\pa
r_a\\
&=&r_a^{-1}\pa_x^{-1}(r_aM)\pa^{-1}(\pa Nr_a^{-1}-(Nr_a^{-1})_x)
r_a-r_a^{-1}\pa^{-1}\left(\pa\pa_x^{-1}(r_aM)Nr_a^{-1}- (\pa_x^{-1}(r_aM)Nr_a^{-1})_x\right)r_a\\
&=&r_a^{-1}\pa_x^{-1}(r_aM)
N-r_a^{-1}\pa_x^{-1}(r_aM)\pa^{-1}(Nr_a^{-1})_x
r_a-r_a^{-1}\pa_x^{-1}(r_aM)N+r_a^{-1}\pa^{-1}
(\pa_x^{-1}(r_aM)Nr_a^{-1})_xr_a\\
&=&T_I(r_a)(M)\pa^{-1}T_I(r_a)^{*-1}(N)-r_a^{-1}\pa^{-1}\left\{T_D(r_a)(M\pa^{-1}N)^*(r_a)\right\},
\end{eqnarray*}
where $\pa_x^{-1}(fg)=\int fg dx$. So (\ref{tiMN}) is be proved.
(\ref{tiMr}) can be derived from (\ref{tiMN}) for $N=r_a$ since
$T_I(r_a)^{*-1}(r_a)=0$.

Then for (\ref{lqa}),
\begin{eqnarray*}
\widetilde{L}^{k+1}(\widetilde{q}_a)&=&T_I(r_a)L^{k+1}T_I(r_a)^{-1}(r_a^{-1})\\
&=& T_I(r_a)L^{k}(\pa+\sum_{i=0}^m q_i\pa^{-1}r_i)r_a^{-1}\pa
r_a(r_a^{-1})= T_I(r_a)L^{k}(q_a).
\end{eqnarray*}
Here we set $q_i\pa_x^{-1}(0)=0$ for $i\neq a$, and
$q_a\pa_x^{-1}(0)=q_a$. (\ref{tiMN}) and (\ref{lqa}) will lead to
(\ref{tiqN}).

At last,
\begin{eqnarray*}
T_I(r_a)^{*-1}(L^*)^k(r_a)=T_I(r_a)^{*-1}(L^*)^{k-1}T_I(r_a)^{*}T_I(r_a)^{*-1}L^*(r_a)=(\widetilde{L}^*)^{k-1}(\widetilde{r}_a).
\end{eqnarray*}

\end{proof}

\section{Additional symmetries of the cKP hierarchy}
The additional symmetry flows \cite{aratyn97} for the cKP hierarchy
(\ref{ckplax}), spanning the Virasoro algebra, are given by:
\begin{equation}\label{addsym}
    \pa_k^*L=[-(ML^k)_-+X_{k-1}^{(1)},L],
\end{equation}
where $M$ is the Orlov-Schulman operator \cite{OS86} defined in the
dressing the ``bare" $M^{(0)}$ operator:
\begin{equation}\label{0osoperator}
    M^{(0)}=\sum_{l\geq1}lt_l\pa^{l-1}=x+\sum_{l\geq1}(l+1)t_{l+1}\pa^l,
\end{equation}
that is,
\begin{eqnarray}
M&=&WM^{(0)}W^{-1}=WxW^{-1}+\sum_{l\geq1}(l+1)t_{l+1}L^l=x+\sum_{l\geq1}(l+1)t_{l+1}(L^l)_++M_-,\label{osoperator}\\
M_-&=&WxW^{-1}-x-\sum_{l\geq1}(l+1)t_{1+l}\frac{\pa W}{\pa
t_l}W^{-1},\label{osminus}
\end{eqnarray}
with (\ref{satoeq}) used in (\ref{osminus}). Define
\begin{equation}\label{xk1}
    X_k^{(1)}=\sum_{i=1}^m\sum_{j=0}^{k-1}\left(j-\frac{1}{2}(k-1)\right)L^{k-1-j}(q_i)\pa^{-1}(L^*)^j(r_i);\
    \ k\geq1,
\end{equation}
which is the essential to ensure the compatibility of the additional
Virasoro symmetry with the constraints (\ref{ckplax}) defining the
cKP hierarchy

Then accordingly, the actions of the additional symmetry flows on
the dressing operators and BA functions are showed that:
\begin{equation}\label{actonwba}
    \pa_k^*W=\left(-(ML^k)_-+X_{k-1}^{(1)}\right)W;\ \
    \pa_k^*\psi(t,\lambda)=\left(-(ML^k)_-+X_{k-1}^{(1)}\right)(\psi(t,\lambda)).
\end{equation}
The corresponding actions on the eigenfunctions $q_i$ and the
adjoint eigenfunctions $r_i$ are derived by considering
$(\pa_k^*L)_-$ listed as follows:
\begin{eqnarray}
\pa_k^*q_i&=&(ML^k)_+(q_i)+\frac{k}{2}L^{k-1}(q_i)+X_{k-1}^{(1)}(q_i),\label{addsymq}\\
\pa_k^*r_i&=&-(ML^k)_+^*(r_i)+\frac{k}{2}(L^*)^{k-1}(r_i)-(X_{k-1}^{(1)})^*(r_i).\label{addsymr}
\end{eqnarray}

\section{Additional symmetries versus the integral type gauge transformations for the cKP hierarchy}
In this section, we will restrict the cKP hierarchy (\ref{ckplax})
to $m=1$ case. And thus its Lax operator is given by
\begin{equation}\label{ckplaxm1}
    L=\pa+q\pa^{-1}r.
\end{equation}

In order to investigate the changes of the additional symmetries
under the integral type gauge transformation $T_I(r)$, some useful
lemmas are needed.

\begin{lemma}
\begin{equation}\label{transformedx}
    T_I(r)X_{k-1}^{(1)}T_I(r)^{-1}=\widetilde{X}_{k-1}^{(1)}
    +\sum_{j=0}^{k-2}\widetilde{L}^{k-j-2}(\widetilde{q})\pa^{-1}\widetilde{L}^j(\widetilde{r})
    +r^{-1}\pa^{-1}\left\{T_I(r)^{*-1}(X_{k-1}^{(1)}-\frac{k}{2}L^{k-1})^*(r)\right\},
\end{equation}
\end{lemma}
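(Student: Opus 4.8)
The plan is to expand $X_{k-1}^{(1)}$ into elementary $f\pa^{-1}g$ pieces and conjugate each by $T_I(r)$ using Lemma \ref{gaugelemma}. In the $m=1$ case (with $a=1$, $q_a=q$, $r_a=r$) the definition (\ref{xk1}) gives $X_{k-1}^{(1)}=\sum_{j=0}^{k-2}\left(j-\tfrac12(k-2)\right)L^{k-2-j}(q)\,\pa^{-1}(L^*)^j(r)$, so I would set $M_j=L^{k-2-j}(q)$, $N_j=(L^*)^j(r)$ and apply (\ref{tiMN}) term by term. Each conjugated summand splits into a \emph{nonlocal} piece $-r^{-1}\pa^{-1}\{T_D(r)(M_j\pa^{-1}N_j)^*(r)\}$ and a \emph{two-term} piece $\widetilde{M}_j\pa^{-1}\widetilde{N}_j$ with $\widetilde{M}_j=T_I(r)(M_j)$, $\widetilde{N}_j=T_I(r)^{*-1}(N_j)$, as in (\ref{symbols}).

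First I would collect the nonlocal pieces. Since $T_D(r)(\,\cdot\,)^*(r)$ is linear, they sum to $-r^{-1}\pa^{-1}\{T_D(r)(X_{k-1}^{(1)})^*(r)\}$, and the operator identity $T_D(r)=-T_I(r)^{*-1}$ (the relation already used in (\ref{qri1})) turns this into $r^{-1}\pa^{-1}\{T_I(r)^{*-1}(X_{k-1}^{(1)})^*(r)\}$. This is the last term of (\ref{transformedx}) except for the missing shift by $-\tfrac{k}{2}L^{k-1}$, which I expect to be produced by a leftover boundary term from the two-term pieces.

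Next I would simplify the two-term pieces via (\ref{lqa}) and (\ref{lra}). Because $T_I(r)L^{p}(q)=\widetilde{L}^{p+1}(\widetilde{q})$ raises the power by one, while $T_I(r)^{*-1}(L^*)^{j}(r)=(\widetilde{L}^*)^{j-1}(\widetilde{r})$ for $j\ge1$ lowers it by one, and $T_I(r)^{*-1}(r)=0$ annihilates the $j=0$ term, the surviving pieces are $\widetilde{L}^{k-1-j}(\widetilde{q})\pa^{-1}(\widetilde{L}^*)^{j-1}(\widetilde{r})$ for $1\le j\le k-2$. Reindexing $j'=j-1$, their contribution becomes $\sum_{j'=0}^{k-3}\left(j'+1-\tfrac12(k-2)\right)\widetilde{L}^{k-2-j'}(\widetilde{q})\pa^{-1}(\widetilde{L}^*)^{j'}(\widetilde{r})$. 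Comparing coefficients with $\widetilde{X}_{k-1}^{(1)}$ plus the ghost-flow sum $\sum_{j=0}^{k-2}\widetilde{L}^{k-2-j}(\widetilde{q})\pa^{-1}(\widetilde{L}^*)^{j}(\widetilde{r})$, whose combined coefficient at index $j$ is $\left(j-\tfrac12(k-2)\right)+1$, I find exact agreement for $0\le j\le k-3$; the only discrepancy is the top index $j=k-2$, present in $\widetilde{X}_{k-1}^{(1)}+$ghost but absent from the reindexed sum, with leftover coefficient $\tfrac12(k-2)+1=\tfrac{k}{2}$. Thus the two-term contribution equals $\widetilde{X}_{k-1}^{(1)}+\sum_{j=0}^{k-2}\widetilde{L}^{k-2-j}(\widetilde{q})\pa^{-1}(\widetilde{L}^*)^{j}(\widetilde{r})-\tfrac{k}{2}\,\widetilde{q}\,\pa^{-1}(\widetilde{L}^*)^{k-2}(\widetilde{r})$. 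Finally, using $\widetilde{q}=1/r$ and (\ref{lra}) in the form $(\widetilde{L}^*)^{k-2}(\widetilde{r})=T_I(r)^{*-1}(L^*)^{k-1}(r)$, the leftover term equals $-r^{-1}\pa^{-1}\{T_I(r)^{*-1}(\tfrac{k}{2}L^{k-1})^*(r)\}$, which merges with the nonlocal piece to replace $(X_{k-1}^{(1)})^*$ by $(X_{k-1}^{(1)}-\tfrac{k}{2}L^{k-1})^*$, yielding exactly (\ref{transformedx}).

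The genuinely substantive point — and the step I would treat most carefully — is the emergence of the $-\tfrac{k}{2}L^{k-1}$ correction: everything else is synchronized reindexing, but one must track that the $T_I(r)$-conjugation shifts the $\widetilde{L}$-power up and the $\widetilde{L}^*$-power down in opposite directions and that the $j=0$ summand drops, so the mismatch is concentrated entirely in the single boundary term at $j=k-2$ whose coefficient $\tfrac12(k-2)+1$ must come out to $\tfrac{k}{2}$.
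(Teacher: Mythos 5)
Your proposal is correct and follows essentially the same route as the paper's own proof: apply Lemma \ref{gaugelemma} term by term to the $m=1$ expansion of $X_{k-1}^{(1)}$, collect the nonlocal pieces into $-r^{-1}\pa^{-1}\{T_D(r)(X_{k-1}^{(1)})^*(r)\}$, reindex the surviving local pieces via (\ref{lqa})--(\ref{lra}), and isolate the boundary term at $j=k-2$ with coefficient $\tfrac{k}{2}$ that merges into the $-\tfrac{k}{2}L^{k-1}$ correction. The only cosmetic difference is that you write the ghost sum with $(\widetilde{L}^*)^j(\widetilde{r})$, which is in fact the consistent form of what the paper abbreviates as $\widetilde{L}^j(\widetilde{r})$.
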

\begin{proof}
According to Lemma \ref{gaugelemma}, and (\ref{xk1}) for $m=1$,
\begin{eqnarray*}
&&T_I(r)X_{k-1}^{(1)}T_I(r)^{-1}\\
&=&-r^{-1}\pa^{-1}T_D(r)(X_{k-1}^{(1)})^*(r)
+\sum_{j=1}^{k-2}\left(j-\frac{1}{2}(k-2)\right)\widetilde{L}^{k-j-1}(\widetilde{q})\pa^{-1}(\widetilde{L}^*)^{j-1}(\widetilde{r})\\
&=& -r^{-1}\pa^{-1}T_D(r)(X_{k-1}^{(1)})^*(r)
+\sum_{j=0}^{k-2}\left(j-\frac{1}{2}(k-2)\right)\widetilde{L}^{k-j-2}(\widetilde{q})\pa^{-1}(\widetilde{L}^*)^{j}(\widetilde{r})\\
&&+\sum_{j=0}^{k-2}\widetilde{L}^{k-j-2}(\widetilde{q})\pa^{-1}(\widetilde{L}^*)^{j}(\widetilde{r})
-(1+k-2-\frac{1}{2}(k-2))\widetilde{q}\pa^{-1}(\widetilde{L}^*)^{k-2}(\widetilde{r})\\
&=&
-r^{-1}\pa^{-1}T_D(r)(X_{k-1}^{(1)})^*(r)+\widetilde{X}_{k-1}^{(1)}+\sum_{j=0}^{k-2}\widetilde{L}^{k-j-2}(\widetilde{q})\pa^{-1}\widetilde{L}^j(\widetilde{r})\\
&&-\frac{k}{2}r^{-1}\pa^{-1}T_I(r)^{*-1}(L^*)^{k-1}(r)\\
&=&\widetilde{X}_{k-1}^{(1)}+\sum_{j=0}^{k-2}\widetilde{L}^{k-j-2}(\widetilde{q})\pa^{-1}\widetilde{L}^j(\widetilde{r})+r^{-1}\pa^{-1}\left\{T_I(r)^{*-1}(X_{k-1}^{(1)}-\frac{k}{2}L^{k-1})^*(r)\right\}.
\end{eqnarray*}
\end{proof}
\begin{lemma}
\begin{equation}\label{pakt}
    \pa_k^*T_I(r)\cdot
    T_I(r)^{-1}=r^{-1}\pa^{-1}\left\{T_I(r)^{*-1}\left(-(ML^k)^*_++\frac{k}{2}(L^*)^{k-1}-(X_{k-1}^{(1)})^*\right)(r)\right\}.
\end{equation}
\end{lemma}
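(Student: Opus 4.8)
The plan is to exploit the fact that $T_I(r)=r^{-1}\pa^{-1}r$ depends on the time variables only through the single function $r$, so that computing $\pa_k^*T_I(r)$ reduces to the Leibniz rule together with the known action of $\pa_k^*$ on $r$. First I would establish a purely algebraic identity valid for an arbitrary function $g$ standing in for $\pa_k^*r$, and only substitute the additional-symmetry formula (\ref{addsymr}) at the very end; this keeps the operator manipulation clean and independent of the specific structure of the flow.

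Concretely, writing $g:=\pa_k^*r$ and using $\pa_k^*(r^{-1})=-r^{-2}g$, the Leibniz rule gives $\pa_k^*T_I(r)=-r^{-2}g\,\pa^{-1}r+r^{-1}\pa^{-1}g$, since $\pa^{-1}$ carries no time dependence. I would then right-multiply by $T_I(r)^{-1}=r^{-1}\pa r$ and simplify the two resulting operator products. The first term collapses via $\pa^{-1}r\cdot r^{-1}\pa=1$ to the multiplication operator $-r^{-2}gr=-r^{-1}g$; for the second I would invoke the commutation identity $\pa^{-1}f\pa=f-\pa^{-1}f_x$ with $f=gr^{-1}$, producing $r^{-1}g-r^{-1}\pa^{-1}(gr^{-1})_x\,r$. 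The delicate point is that the two zeroth-order (multiplication) contributions $-r^{-1}g$ and $+r^{-1}g$ cancel exactly, leaving the single integral term
\be
\pa_k^*T_I(r)\cdot T_I(r)^{-1}=-r^{-1}\pa^{-1}\{r\,(gr^{-1})_x\}.
\ee

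To finish I would identify the bracketed function: since $T_D(r)(g)=r\pa r^{-1}(g)=r(gr^{-1})_x$, and since (\ref{qri1}) records $T_I(r)^{*-1}=-T_D(r)$ as operators acting on functions, we have $r(gr^{-1})_x=-T_I(r)^{*-1}(g)$. Substituting this back converts the right-hand side into $r^{-1}\pa^{-1}\{T_I(r)^{*-1}(g)\}$, and replacing $g=\pa_k^*r$ by the expression in (\ref{addsymr}), namely $-(ML^k)_+^*(r)+\frac{k}{2}(L^*)^{k-1}(r)-(X_{k-1}^{(1)})^*(r)$, yields exactly (\ref{pakt}).

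I do not expect a genuine conceptual obstacle here, as the argument is a short pseudo-differential computation. The only step requiring real care is the bookkeeping between operator composition and pointwise multiplication of functions, and in particular verifying the exact cancellation of the two leading terms: a slip in a sign, or in the commutation identity $\pa^{-1}f\pa=f-\pa^{-1}f_x$, would destroy it. I would therefore carry the factors $r$ and $r^{-1}$ explicitly through every product rather than commuting functions past $\pa^{\pm1}$ prematurely.
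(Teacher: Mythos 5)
Your computation is correct and follows essentially the same route as the paper: both reduce $\pa_k^*T_I(r)\cdot T_I(r)^{-1}$ via the Leibniz rule and the commutation identity $f\pa=\pa f-f_x$ to the single integral term $-r^{-1}\pa^{-1}\{r(r^{-1}\pa_k^*r)_x\}$, identify this as $r^{-1}\pa^{-1}\{T_I(r)^{*-1}(\pa_k^*r)\}$ using $T_I(r)^{*-1}=-T_D(r)$, and then substitute (\ref{addsymr}). The only cosmetic difference is that the paper differentiates $T_I(r)^{-1}$ via $\pa_k^*T_I(r)\cdot T_I(r)^{-1}=-T_I(r)\,\pa_k^*(T_I(r)^{-1})$, whereas you differentiate $T_I(r)$ directly and right-multiply by $T_I(r)^{-1}$; the cancellation of the zeroth-order terms occurs identically in both versions.
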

\begin{proof}By (\ref{addsymr}),
\begin{eqnarray*}
 \pa_k^*T_I(r)\cdot T_I(r)^{-1}&=&=-T_I(r)\pa_k^*(T_I(r)^{-1})\\
 &=&r^{-1}\pa^{-1}\pa_k^{*}(r) r^{-1}\pa r-r^{-1}\pa_k^*(r)\\
 &=&r^{-1}\pa^{-1}(\pa \pa_k^{*}(r) r^{-1}-(\pa_k^{*}(r) r^{-1})_x)
 r-r^{-1}\pa_k^*(r)\\
 &=& - r^{-1}\pa^{-1}\left\{r(r^{-1}\pa_k^*r)_x\right\}\\
 &=& r^{-1}\pa^{-1}\left\{T_I(r)^{*-1}(\pa_k^*r)\right\}\\
 &=&r^{-1}\pa^{-1}\left\{T_I(r)^{*-1}\left(-(ML^k)^*_++\frac{k}{2}(L^*)^{k-1}-(X_{k-1}^{(1)})^*\right)(r)\right\}.
\end{eqnarray*}
\end{proof}
\begin{proposition}\label{addgau} The additional symmetry flows (\ref{addsym}) for the cKP
hierarchy (\ref{ckplaxm1}) commute with the integral type
transformations preserving the form of cKP , up to shifting of
(\ref{addsym}) by ordinary time flows, that is,
\begin{equation}\label{ckpaddgauge}
    \pa_k^*\widetilde{L}=[-(\widetilde{M}\widetilde{L}^k)_-+\widetilde{X}_{k-1}^{(1)},\widetilde{L}]+\frac{\pa \widetilde{L}}{\pa
    t_{k-1}}.
\end{equation}
\end{proposition}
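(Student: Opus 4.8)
The plan is to differentiate the conjugation relation $\widetilde L=T_I(r)LT_I(r)^{-1}$ along the additional flow $\pa_k^*$ and to reduce everything to the two lemmas just proved. Writing $T=T_I(r)$ and using $\pa_k^*(T^{-1})=-T^{-1}(\pa_k^*T)T^{-1}$, the Leibniz rule gives $\pa_k^*\widetilde L=[(\pa_k^*T)T^{-1},\widetilde L]+T(\pa_k^*L)T^{-1}$. Inserting the additional flow $\pa_k^*L=[-(ML^k)_-+X_{k-1}^{(1)},L]$ and conjugating the commutator by $T$ turns the last term into $[T(-(ML^k)_-+X_{k-1}^{(1)})T^{-1},\widetilde L]$, so that $\pa_k^*\widetilde L=[\Theta,\widetilde L]$ with $\Theta=(\pa_k^*T)T^{-1}+T(-(ML^k)_-+X_{k-1}^{(1)})T^{-1}$. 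Since $\frac{\pa\widetilde L}{\pa t_{k-1}}=[(\widetilde L^{k-1})_+,\widetilde L]$, the whole statement reduces to proving that $\Theta$ coincides with $-(\widetilde M\widetilde L^k)_-+\widetilde X_{k-1}^{(1)}+(\widetilde L^{k-1})_+$ modulo an operator commuting with $\widetilde L$.

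I would then feed in the two lemmas. Lemma (\ref{pakt}) gives $(\pa_k^*T)T^{-1}$ as one $r^{-1}\pa^{-1}\{\cdots\}$ term, while Lemma (\ref{transformedx}) rewrites $TX_{k-1}^{(1)}T^{-1}$ as $\widetilde X_{k-1}^{(1)}$, plus the sum $\sum_{j=0}^{k-2}\widetilde L^{k-j-2}(\widetilde q)\pa^{-1}(\widetilde L^*)^{j}(\widetilde r)$, plus a second $r^{-1}\pa^{-1}\{\cdots\}$ term. The key simplification is that these two $r^{-1}\pa^{-1}\{\cdots\}$ contributions are built from the opposite combinations $\mp(X_{k-1}^{(1)})^*$ and $\pm\frac{k}{2}(L^*)^{k-1}$, so they cancel and leave only $-r^{-1}\pa^{-1}\{T^{*-1}(ML^k)_+^*(r)\}$. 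I would also record the telescoping identity, valid for any cKP Lax operator with $L_-=\widetilde q\pa^{-1}\widetilde r$, that $\sum_{j=0}^{n-1}\widetilde L^{n-1-j}(\widetilde q)\pa^{-1}(\widetilde L^*)^{j}(\widetilde r)=(\widetilde L^{n})_-$ (immediate for $n=1$, and by induction in general), so that the sum above is exactly $(\widetilde L^{k-1})_-$.

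The genuinely new ingredient, and the actual source of the $t_{k-1}$-shift, is the behaviour of the Orlov--Schulman operator under $T_I(r)$. Here the plan is to observe that the transformed dressing operator is $\widetilde W=T_I(r)W\pa$ and that $\pa M^{(0)}\pa^{-1}=M^{(0)}+\pa^{-1}$, whence $\widetilde M=\widetilde WM^{(0)}\widetilde W^{-1}=T_I(r)MT_I(r)^{-1}+\widetilde L^{-1}$ and therefore $T(ML^k)T^{-1}=\widetilde M\widetilde L^k-\widetilde L^{k-1}$. The extra $\widetilde L^{k-1}$ is precisely what will generate $(\widetilde L^{k-1})_+$, that is the ordinary flow $\frac{\pa\widetilde L}{\pa t_{k-1}}$, inside the commutator.

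Finally I would use this to convert $-T(ML^k)_-T^{-1}=-T(ML^k)T^{-1}+T(ML^k)_+T^{-1}$ into $-(\widetilde M\widetilde L^k)_-$ plus negative-order corrections, and collect them with the surviving residual $-r^{-1}\pa^{-1}\{T^{*-1}(ML^k)_+^*(r)\}$ and with $(\widetilde L^{k-1})_-$; for the conjugation of the differential operator $(ML^k)_+$ the exact identity $TAT^{-1}=A-r^{-1}\pa^{-1}[\pa,rAr^{-1}]r$ is convenient. I expect the main obstacle to be exactly this last bookkeeping: because the projections $(\,\cdot\,)_\pm$ do not commute with conjugation by $T_I(r)$, one must track the negative orders carefully and verify that, after using the telescoping identity and the Orlov--Schulman covariance, all the residual negative-order pieces organize into $-(\widetilde M\widetilde L^k)_-+(\widetilde L^{k-1})_+$ up to a tail annihilated by $\mathrm{ad}_{\widetilde L}$; such a tail commutes with $\widetilde L$ and so disappears from $[\Theta,\widetilde L]$, yielding (\ref{ckpaddgauge}).
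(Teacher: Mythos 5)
Your skeleton coincides with the paper's: differentiate $\widetilde L=T_I(r)LT_I(r)^{-1}$ by the Leibniz rule to get $\pa_k^*\widetilde L=[\Theta,\widetilde L]$ with $\Theta=(\pa_k^*T)T^{-1}+T(-(ML^k)_-+X^{(1)}_{k-1})T^{-1}$, feed in (\ref{transformedx}) and (\ref{pakt}), observe that the $(X^{(1)}_{k-1})^*$ and $\frac{k}{2}(L^*)^{k-1}$ contributions cancel, and identify the remaining sum with $(\widetilde L^{k-1})_-$ via (\ref{laxminus}). However, the step you leave as bookkeeping that you ``expect'' to work out is precisely the crux, and the paper closes it with a single identity you do not invoke: the Oevel--Rogers formula (\ref{usefulformula}), which says exactly that $-T(ML^k)_-T^{-1}$ together with the surviving residual $r^{-1}\pa^{-1}\{T^{*-1}(-(ML^k)_+^*)(r)\}$ recombines into $-\bigl(T(ML^k)T^{-1}\bigr)_-$ on the nose, with no leftover ``tail annihilated by $\mathrm{ad}_{\widetilde L}$'' to be argued away. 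Without that identity (or an equivalent explicit computation) your argument is incomplete at its central step.

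The second, more substantive divergence is your treatment of the Orlov--Schulman operator. The paper simply takes $\widetilde M=T_I(r)MT_I(r)^{-1}$, so that $-\bigl(T(ML^k)T^{-1}\bigr)_-=-(\widetilde M\widetilde L^k)_-$ and the entire $t_{k-1}$-shift is carried by the single $(\widetilde L^{k-1})_-$ produced by Lemma (\ref{transformedx}). You instead set $\widetilde M=\widetilde WM^{(0)}\widetilde W^{-1}=TMT^{-1}+\widetilde L^{-1}$ (a correct and arguably more natural normalization) and attribute the shift to the resulting extra $-\widetilde L^{k-1}$. But you cannot have both sources: keeping your $\widetilde M$ \emph{and} the $(\widetilde L^{k-1})_-$ from the telescoping identity yields $\Theta=-(\widetilde M\widetilde L^k)_-+\widetilde X^{(1)}_{k-1}+2(\widetilde L^{k-1})_-$, i.e.\ a doubled shift rather than the single one in (\ref{ckpaddgauge}). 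Your sign is also inconsistent: modulo operators commuting with $\widetilde L$ one has $(\widetilde L^{k-1})_-\equiv-(\widetilde L^{k-1})_+$, so a leftover $(\widetilde L^{k-1})_-$ contributes $[(\widetilde L^{k-1})_-,\widetilde L]=-\pa_{t_{k-1}}\widetilde L$, not the $+(\widetilde L^{k-1})_+$ and hence $+\pa_{t_{k-1}}\widetilde L$ that you assert. (Admittedly the paper's own final substitution is cavalier about this same sign in front of $\pa_{t_{k-1}}\widetilde L$, but your accounting, with the extra $\widetilde L^{-1}$ folded into $\widetilde M$, reproduces neither the paper's intermediate formula (\ref{proof2}) nor its stated conclusion.)
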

\begin{proof}Firstly, by (\ref{addsym}),
\begin{eqnarray}
\pa_k^* \widetilde{L}&=& \pa_k^* T_I(r)\cdot
LT_I(r)^{-1}+T_I(r)\pa_k^*L\cdot
T_I(r)^{-1}-T_I(r)LT_I(r)^{-1}\cdot\pa_k^*T_I(r)\cdot T_I(r)^{-1}\nonumber\\
&=&
\left[T_I(r)\left(-(ML^k)_-+X_{k-1}^{(1)}\right)T_I(r)^{-1}+\pa_k^*T_I(r)\cdot
T_I(r)^{-1}, \widetilde{L}\right]\label{proof1}
\end{eqnarray}

Then with the help of (\ref{transformedx}), (\ref{pakt}), and the
following useful formula \cite{or1993}
\begin{equation}\label{usefulformula}
    (r^{-1}\pa^{-1}rPr^{-1}\pa r)_-=r^{-1}\pa^{-1}rP_-r^{-1}\pa
    r-r^{-1}\pa^{-1}\left\{r(r^{-1}(P^*)_+(r))_x\right\},
\end{equation}
we have
\begin{eqnarray}
&&T_I(r)\left(-(ML^k)_-+X_{k-1}^{(1)}\right)T_I(r)^{-1}+\pa_k^*T_I(r)\cdot
T_I(r)^{-1}\nonumber\\
&=&
T_I(r)\left(-(ML^k)_-\right)T_I(r)^{-1}+r^{-1}\pa^{-1}\left\{T_I(r)^{*-1}\left(-(ML^k)^*_+\right)(r)\right\}\nonumber\\
&&+\widetilde{X}_{k-1}^{(1)}
+\sum_{j=0}^{k-2}\widetilde{L}^{k-j-2}(\widetilde{q})\pa^{-1}\widetilde{L}^j(\widetilde{r})\nonumber\\
&=&-(\widetilde{M}\widetilde{L}^k)_-+\widetilde{X}_{k-1}^{(1)}+(\widetilde{L}^{k-1})_-,\label{proof2}
\end{eqnarray}
where the following relation\cite{orlov1996} is used,
\begin{equation}\label{laxminus}
    (\widetilde{L}^{k-1})_-=\sum_{j=0}^{k-2}\widetilde{L}^{k-j-2}(\widetilde{q})\pa^{-1}\widetilde{L}^j(\widetilde{r}).
\end{equation}

At last, the substituting (\ref{proof2}) into (\ref{proof1}) gives
rise to (\ref{ckpaddgauge}).
\end{proof}
\noindent{\bf Remark}: when $m>1$, (\ref{ckpaddgauge}) will not
hold. In fact, when $m>1$, (\ref{transformedx}) will become into
\begin{eqnarray}
&&T_I(r_a)X_{k-1}^{(1)}T_I(r_a)^{-1}\nonumber\\
&=&\widetilde{X}_{k-1}^{(1)}
    +\sum_{j=0}^{k-2}\widetilde{L}^{k-j-2}(\widetilde{q_a})\pa^{-1}\widetilde{L}^j(\widetilde{r_a})
     +r_a^{-1}\pa^{-1}\left\{T_I(r_a)^{*-1}(X_{k-1}^{(1)}-\frac{k}{2}L^{k-1})^*(r_a)\right\},
\end{eqnarray}
where $r_a$ is one of the adjoint eigenfunctions in (\ref{ckplax}).
We can see that
$\sum_{j=0}^{k-2}\widetilde{L}^{k-j-2}(\widetilde{q_a})\pa^{-1}\widetilde{L}^j(\widetilde{r_a})$
can not be written as $(\widetilde{L}^{k-1})_-$ because of
(\ref{laxminus}). Thus from the proof of (\ref{ckpaddgauge}), the
term of $\pa_{t_{k-1}}\widetilde{L}$ in (\ref{ckpaddgauge}) can not
be derived.

 The string-equation constraint can be derived through the
lowest additional symmetry $\pa_0^*$, that is,
\begin{equation}\label{stringeq}
    \pa_0^*L=0\ \ \rightarrow\ \ [M_+, L]=-1;\ \ \pa_0^* r=0\ \ \rightarrow\
    \ M^*_+(r)=0
\end{equation}
With the help of (\ref{eigenfunction}), (\ref{osoperator}) and
(\ref{addsymr}), the constraints on the Lax operator $L$ and the
adjoint eigenfunction $r(t)$ will be derived respectively.
\begin{eqnarray}
\sum_{l\geq1}(l+1)t_{l+1}\frac{\pa L}{\pa
t_l}+[x,L]&=&-1\label{stringlax}\\
\left(-\sum_{l\geq1}(l+1)t_{l+1}\frac{\pa}{\pa
t_l}+x\right)r(t)&=&0\label{stringrt}
\end{eqnarray}
As for the string-equation constraint on the tau functions are
showed in the following proposition:
\begin{proposition} The Wronskian tau functions (\ref{gaugetau}) of the cKP
hierarchy (\ref{ckplaxm1}) generated by the integral type gauge
transformation, invariant under the lowest additional symmetry flow
(\ref{addsym}), satisfy the constraint equation:
\begin{eqnarray}
&&\left(-\sum_{l\geq 1}(l+1)t_{l+1}\frac{\pa}{\pa
    t_l}+nx\right)\frac{\tau^{(n)}}{\tau^{(0)}}\nonumber\\
    &=& \left(-\sum_{l\geq 1}(l+1)t_{l+1}\frac{\pa}{\pa
    t_l}+nx\right)W_n\left[\left(L^{(0)*}\right)^{n-1}(r),\left(L^{(0)*}\right)^{n-2}(r),...,\left(L^{(0)*}\right)(r),r\right]=0\label{stringtau}
\end{eqnarray}
\end{proposition}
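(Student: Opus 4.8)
The plan is to reduce the statement to a purely algebraic recursion for the functions $\eta_j:=(L^{(0)*})^j(r)$ that build the Wronskian, and then to an elementary determinant identity. Introduce $\mathcal{S}=-\sum_{l\geq 1}(l+1)t_{l+1}\frac{\pa}{\pa t_l}$ and $\mathcal{D}_k=\mathcal{S}+kx$, so that the operator in (\ref{stringtau}) is exactly $\mathcal{D}_n$, while the string constraints (\ref{stringlax}) and (\ref{stringrt}) read $\mathcal{S}L^{(0)}+[x,L^{(0)}]=-1$ and $\mathcal{D}_1 r=0$. First I would take the formal adjoint of (\ref{stringlax}); using $\pa^*=-\pa$, $x^*=x$ and $(AB)^*=B^*A^*$ this turns the commutator around and gives the operator identity $\mathcal{S}L^{(0)*}=1-[x,L^{(0)*}]$. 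Substituting this into the Leibniz rule $\mathcal{S}(L^{(0)*}f)=(\mathcal{S}L^{(0)*})(f)+L^{(0)*}(\mathcal{S}f)$ and regrouping the multiplication-by-$x$ terms yields the commutation relation $\mathcal{D}_1 L^{(0)*}=L^{(0)*}\mathcal{D}_1+1$ on functions.

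Next I would iterate this relation along the tower $\eta_j=L^{(0)*}\eta_{j-1}$. Since $\mathcal{D}_1\eta_0=\mathcal{D}_1 r=0$ by (\ref{stringrt}), a one-line induction gives the clean recursion $\mathcal{D}_1\eta_j=j\,\eta_{j-1}$ for all $j\geq 0$, equivalently $\mathcal{S}\eta_j=j\,\eta_{j-1}-x\,\eta_j$. This is precisely the input needed to differentiate the Wronskian $\tau^{(n)}/\tau^{(0)}=W_n[\eta_{n-1},\dots,\eta_0]=\det(\pa_x^{i-1}\eta_{n-j})_{i,j=1}^{n}$ from (\ref{gaugetau}).

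The heart of the argument is then the determinant computation. Because none of the coefficients of $\mathcal{S}$ depend on $t_1=x$, the vector field $\mathcal{S}$ commutes with $\pa_x=\frac{\pa}{\pa t_1}$, so applying $\mathcal{S}$ to $W$ amounts, column by column, to replacing column $j$ by the $x$-jet of $\mathcal{S}\eta_{n-j}=(n-j)\eta_{n-j-1}-x\eta_{n-j}$. The Leibniz expansion $\pa_x^{i-1}(x\eta)=x\,\pa_x^{i-1}\eta+(i-1)\pa_x^{i-2}\eta$ breaks each modified column into three pieces. The middle piece returns $-x$ times the original column, and summing over the $n$ columns produces $-nxW$, cancelled exactly by the $+nxW$ coming from $\mathcal{D}_n=\mathcal{S}+nx$ --- this is what forces the coefficient $nx$ rather than $x$. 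What remains is $\mathcal{D}_n W=\sum_{j}(n-j)\det(\cdots)-\sum_{j}\det(\cdots)$. In the first sum the $j$-th term replaces column $j$ by the jet of $\eta_{n-j-1}$, which is column $j+1$, so every term has two equal columns and vanishes (the $j=n$ term dies through its coefficient). In the second sum column $j$ is replaced by $A$ times itself, where $A$ is the fixed strictly lower-triangular shift matrix $A_{i,i-1}=i-1$; by the identity $\sum_{j}\det(\dots,Ac_j,\dots)=(\tr A)\det M$ and $\tr A=0$, this sum vanishes as well. Hence $\mathcal{D}_n(\tau^{(n)}/\tau^{(0)})=0$.

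The hard part will be the bookkeeping in this final determinant step: one must verify that the Leibniz correction terms assemble exactly into the action of a trace-free shift operator, and that the tower produces column $j+1$ on the nose so that the first sum collapses by column coincidence. The two genuinely algebraic inputs are the adjoint of the string equation and the resulting recursion $\mathcal{D}_1\eta_j=j\eta_{j-1}$; once these are in hand the remainder is standard multilinear algebra for Wronskian determinants.
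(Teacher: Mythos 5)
Your argument is correct, but it takes a genuinely different route from the paper. The paper never touches the Wronskian: it uses the product form of (\ref{gaugetau}), $\tau^{(n)}/\tau^{(0)}=r^{(n-1)}r^{(n-2)}\cdots r^{(0)}$, notes that by Proposition \ref{addgau} (at $k=0$, where the shift term $\pa/\pa t_{k-1}$ is absent) each iterated adjoint eigenfunction $r^{(k)}$ satisfies the same constraint (\ref{stringrt}), and then the Leibniz rule on the $n$-fold product produces the coefficient $nx$ and gives zero in three lines. You instead work directly with the Wronskian $W_n[\eta_{n-1},\dots,\eta_0]$, $\eta_j=(L^{(0)*})^j(r)$, using only the string constraints on the \emph{initial} data: the formal adjoint of (\ref{stringlax}) yields $\mathcal{D}_1L^{(0)*}=L^{(0)*}\mathcal{D}_1+1$, hence the recursion $\mathcal{D}_1\eta_j=j\,\eta_{j-1}$, and the determinant bookkeeping closes the computation — I checked that the shifted columns coincide with their neighbours, that the Leibniz corrections assemble into a strictly lower-triangular (hence traceless) shift matrix, and that the $-nxW$ term cancels $+nxW$; all of this is sound. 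What your proof buys is independence from Proposition \ref{addgau}: you never need to know that the transformed $r^{(k)}$ inherit the constraint, only that $L^{(0)}$ and $r$ satisfy (\ref{stringlax}) and (\ref{stringrt}). The cost is a longer multilinear-algebra verification where the paper's proof is three lines. One trivial slip: with your $\mathcal{S}=-\sum_{l\geq1}(l+1)t_{l+1}\pa_{t_l}$, equation (\ref{stringlax}) reads $-\mathcal{S}L^{(0)}+[x,L^{(0)}]=-1$ rather than $\mathcal{S}L^{(0)}+[x,L^{(0)}]=-1$; your adjoint identity $\mathcal{S}L^{(0)*}=1-[x,L^{(0)*}]$ is nevertheless the one that follows from the correct sign, so nothing downstream is affected.
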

\begin{proof}
Firstly, note that $r^{(k)}$ satisfy the same constraint
(\ref{stringrt}) according to Proposition \ref{addgau}. Thus from
(\ref{gaugetau}), we know that
\begin{eqnarray*}
&&\left(-\sum_{l\geq 1}(l+1)t_{l+1}\frac{\pa}{\pa
    t_l}+nx\right)\frac{\tau^{(n)}}{\tau^{(0)}}\\
&=&\left(-\sum_{l\geq 1}(l+1)t_{l+1}\frac{\pa}{\pa
    t_l}+nx\right)r^{(n-1)}r^{(n-2)}...r^{(0)}\\
&=&\sum_{k=0}^{n-1}r^{(n-1)}...\widehat{r^{(n-1)}}...r^{(0)}\left(-\sum_{l\geq
1}(l+1)t_{l+1}\frac{\pa}{\pa
    t_l}+x\right)r^{(k)}=0
\end{eqnarray*}
\end{proof}

\section{Conclusions and Discussions}
The interplay of the integral type gauge transformation $T_I$ with
the additional symmetry at the instance of cKP integrable hierarchy
is showed in Proposition \ref{addgau} (see (\ref{ckpaddgauge})),
which shows the intrinsic coordination of the cKP hierarchy, just
like the compatibility \cite{aratyn97} between the differential type
gauge transformation and the additional symmetry. The
string-equation constraints, through the lowest additional symmetry
$\pa_0^*$, on the Lax operator $L$, the adjoint eigenfunction $r(t)$
and the tau functions, are listed in (\ref{stringlax})
(\ref{stringrt}) and (\ref{stringtau}) respectively. These results
show that KP integrable hierarchy is a kind of integrable system
with many intrinsic coordinations.

 Although this kind of
compatibility between the differential type gauge transformation
$T_D$ and the additional symmetry has been given in \cite{aratyn97}.
But this is not enough to study the cBKP and cCKP because the
combination of $T_D$ and  $T_I$ is necessary to satisfy the
reduction conditions of the BKP and CKP hierarchies\cite{he2007}. So
our results provide a possible basis to explore the consistency of
the gauge transformation and the additional symmetry in the cBKP and
cCKP hierarchies. This work will be done in a near future.

{\bf {Acknowledgements:}}
  {\small  This work is supported by ``the Fundamental Research Funds for the
Central Universities" No. 2012QNA45.}

\end{document}